\newtheorem{theorem}{Theorem}
\theoremstyle{definition}
\newtheorem{defn}{Definition}
\newtheorem{ftr}{Feature}
\newcommand{\saa}{\mathfrak{S}_{A\rightarrow A}}
\newcommand{\saab}{\mathfrak{S}_{A\rightarrow AB}}
\newcommand{\half}{\frac{1}{2}}
\begin{document}
\title{The operational reality of quantum nonlocality}
\author{R. Srikanth}
\affiliation{Theoretical Sciences Division,
	Poornaprajna Institute of Scientific Research (PPISR), \\
	Bidalur post, Devanahalli, Bengaluru 562164, India}

\begin{abstract}
Does the remote measurement-disturbance of the quantum state of a system $B$ by measurement on system $A$ entangled with $B$, constitute a real disturbance -- i.e., an objective alteration--  of $B$ in an operational sense?  Employing information theoretic criteria motivated by operational considerations alone, we argue that the disturbance in question is  real for a subset of steerable correlations. This result highlights the distinction between quantum no-signaling and relativistic signal-locality. It furthermore suggests a natural reason why a convex operational theory should be non-signaling: namely, to ensure the consistency between the properties of reduced systems and those of single systems.
\end{abstract}
\maketitle

\section{Introduction}
A basic phenomenon underlying important quantum information processing tasks such as remote state preparation \cite{pati1999minimum}, quantum teleportation \cite{hu2020experimental}, device-independent (DI) certification of randomness \cite{pironio2010randm} and DI cryptography \cite{vazirani2019fully}, is the remote ``collapse'' or reduction of the quantum state of an entangled system $B$ by means of a local measurement on its partner system $A$. This phenomenon, which famously figures in the Einstein, Podolsky and Rosen (EPR) paradox \cite{einstein1935can}, lies at the heart of quantum nonlocality \cite{brunner2014bell}. As it happens, the basic nature of this remote measurement-disturbance of system $B$'s state-- in particular, the issue of whether the disturbance constitutes an objective change of $B$ in an operational sense, or merely a subjective update to the observer's knowledge of $B$--  has remained moot in quantum mechanics (QM).

This state of affairs is  part of the broader question whether the quantum state real, and is tied to the fact that the interpretation of QM is still not universally agreed upon \cite{leifer2014is,cabello2017interpretations}. Certain interpretations of QM are consistent with the idea that the quantum state is real \cite{cushing2013bohmian,beltrametti1995classical, penrose1996gravity, pusey2012on, bassi2013models, patel2017weak, ghirardi2020collapse}, while others that are inspired by the Copenhagen interpretation of QM \cite{faye2002copenhagen} are consistent with the idea that the quantum state only represents our knowledge about measurement outcomes or  underlying ontic variables \cite{fuchs2014introduction, peres2006quantum, brukner2003information, spekkens2007evidence, harrigan2010einstein}. Correspondingly, the wavefunction collapse takes on an ontic sense, or not, in these interpretations.

Bell's theorem \cite{brunner2014bell} itself can't help here. Its operational significance is not an indication of the reality of this remote disturbance, but rather a complementarity between signaling and unpredictability \cite{cavalcanti2012bell, hall2010complementary, kar2011complementary, aravinda2015complementarity, aravinda2016extending}. This leads, via the assumption of no-signaling, to the possibility of device-independent randomness generation \cite{herrero2017quantum} and key distribution \cite{vazirani2019fully}, but doesn't give any guidance on addressing the foundational  issue raised above.

These considerations prompt the question of whether a relevant concept of reality can be indicated that is independent of the physical interpretation or mathematical representation of QM. Here, we wish to show that this can be achieved based only on basic operational considerations about measurement-induced disturbance. Here it will be convenient to use the framework of generalized probability theories (GPTs) \cite{hardy2001quantum, barrett2007information, janotta2014generalized, chiribella2015entanglement}, which will make it clear that the concept of reality developed here is purely of operational origin and independent of the Hilbert space formalism. Our point of departure is the intuitive idea that if a measurement-disturbance can be the basis for communication, then it constitutes an objective fact and is thus real. Requiring only operational criteria and not extraneous paraphernalia such as hidden variables or parallel worlds, this approach promises to provide an \textit{intrinsic} interpretation of QM.

\color{black}

The remaining article is arranged as follows. Starting with a brief note on the concept of  ``operational'', we define the concept of operational reality of local measurement disturbance in Section \ref{sec:single}. This is then extended in Section \ref{sec:multi} to the case of  remote measurement-disturbance, which is shown  in Section \ref{sec:bell} to be related to Einstein-Podolsky-Rosen (EPR) steering. Building on this, a device-independent approach to indicate remote measurement-disturbance is discussed. Finally, we present our conclusions and related discussions in Section \ref{sec:conclu}. Here, we argue that the case for the operationally real nonlocality highlights the tension between quantum no-signaling and relativistic signal locality.

\section{Measurement disturbance and uncertainty in single systems \label{sec:single}} 
The operational formulation of a physical theory (such as quantum mechanics) consists in an abstract characterization of the theory as a GPT aka convex operational theory, i.e., in terms of rules governing preparation procedures, probabilities for measurement outcomes and reversible operations,  while avoiding concepts that cannot be accessed directly, such as the Hilbert space, complex global phase, etc. \cite{janotta2014generalized, barnum2011information}. In other words, the operational formulation corresponds to the basic ``syntax'' of a theory, devoid of the ``semantics'' pertaining to its physical interpretation or mathematical representation. 

An operational feature of a theory is one that can be defined as an element of its GPT formulation. For example, a state $\varphi$ is operationally understood as an equivalence class of preparation procedures. 
Given a bipartite state $\varphi_{AB}$, the marginal state of system $A$, denoted $\varphi_A^{\#}$, is the GPT analogue of the reduced density operator obtained via partial tracing over $B$. An entangled state in a GPT corresponds to a valid composite state that cannot be expressed as a convex combination of product states. 

In practice, we would require a detailed characterization of the subsystems in order to indicate a given joint state  in a GPT is entangled. A device-independent formulation represents a further level of abstraction where even such a characterization is not required, as the input-output statistics can self-test, typically based on the violation of Bell-type inequalities \cite{clauser1969proposed,brunner2014bell}.

A basic feature of QM, and indeed of any nonclassical GPT, is that the act of measurement can disturb-- i.e., randomly alter-- the measured state of the given system, even in the case of a \textit{pure} state. This phenomenon is ultimately due to the non-simpliciality of the relevant state space of the theory \cite{aravinda2017origin}.
\begin{defn}
Suppose measurement $x$ is performed on a quantum or GPT system $A$ prepared in state $\varphi_A$, producing outcome $a$, with probability $\mathfrak{p}(a|x,\varphi)$. Let the normalized post-measurement state be denoted $\varphi_A^{a|x}$. The state change 
\begin{equation}
\varphi_A \longrightarrow \varphi_A^{a|x},
\label{eq:mezdis}
\end{equation}
is called the measurement(-induced) disturbance of the initial state $\varphi_A$. \hfill $\blacklozenge$
\end{defn}
Here the measurement used, including the state update rule, may be considered as the GPT analogue of the L\"uders instrument in the context of quantum measurement \cite{busch1996quantum}.  As it happens, no experiment to date has succeeded in observing such disturbance of a quantum state, called a \textit{collapse} or \textit{state reduction} in this context.  The quantum measurement problem is concerned with the question of whether state reduction happens objectively. Revisiting this issue, we ask: Is the measurement disturbance represented in Eq. (\ref{eq:mezdis}) real? That is, does the measurement objectively alter system $A$, or is measurement disturbance only epistemic, i.e., one that only updates the observer's knowledge about a pre-existing property of $A$, without objectively altering $A$? 

Given a GPT, consider a communication protocol $\mathfrak{P}_1$ between Alice and Bob, wherein Alice transmits the non-classical system $A$ either in the state $\varphi_A$ or in the post-selected state $\varphi_{A}^{a|x}$. (She measures $x$ and discards states $\varphi_{A}^{a^\prime|x}$ for which $a^\prime \ne a$). Importantly, she doesn't send any supplementary classical information to Bob. By performing on $A$ a measurement $x^\prime$ incompatible with $x$ \cite{busch2013comparing}, Bob tries to determine which out of $\varphi_{A}$ and $\varphi_{A}^{a|x}$ Alice transmitted.  If he succeeds with a probability $p$ better than a random guess's (i.e., $p > \frac{1}{2})$,  then it is quite natural to infer that Alice's measurement objectively disturbed system $A$. Letting $\mathfrak{S}_{A\rightarrow A}$ denote the amount of information (in bits) that Alice can communicate on average to Bob using protocol $\mathfrak{P}_1$, we have the following intuitive, operational concept of the reality of Alice's measurement disturbance, applicable to any GPT:
\begin{defn}
Given state $\varphi_{A}$ and measurement $x$, if it is the case in protocol $\mathfrak{P}_1$ that for some outcome $a$
\begin{equation}
\mathfrak{S}_{A\rightarrow A} > 0,
\label{eq:saa}
\end{equation}
then the measurement disturbance Eq. (\ref{eq:mezdis}) constitutes an operationally real disturbance of $A$. $\hfill \blacklozenge$
\label{def:proximal}
\end{defn} 
Note that the concept of reality in Definition \ref{def:proximal} avoids any reference to  a ``hidden variable'' (HV) ontology. 
In a practical implementation of protocol $\mathfrak{P}_1$, it is not necessary to post-select on a specific outcome $a$. Let $\mathcal{M}_x(\varphi_A) \equiv \sum_a \mathfrak{p}(a|x) \varphi_{A}^{a|x}$ represent the non-selective state obtained by ignoring the outcome. Alice can use $\varphi_{A}$ and $\mathcal{M}_x(\varphi_{A})$ as symbols for communicating. If this works, then obviously $\Vert \varphi_{A} - \mathcal{M}_x(\varphi_{A})\Vert>0$, and by convexity, $\exists_a \Vert \varphi_A - \varphi^{a|x}_A\Vert > 0$, from which Eq. (\ref{eq:saa}) follows, and reality is inferred via Definition \ref{def:proximal}.

As an illustration in QM, suppose Alice chooses to perform or not to perform measurement $\sigma_X$ on qubit $A$ initialized in the state $\varphi_A = \cos^2(\theta/2)\ket{0}\bra0 + \sin^2(\theta/2)\ket{1}\bra1$, with $0 \le \theta \le \frac{\pi}{2}$ and $\theta \ne \pi/4$. She communicates with Bob by sending either the state $\varphi_A$ or $\mathcal{M}_{\sigma_X}(\varphi_{A}) := \frac{\mathbb{I}}{2}$. Since these two states are probabilistically distinguishable, it follows that $\mathfrak{S}_{A\rightarrow A} > 0$. Thus, we infer that Alice's measurement really disturbs the system in the operational sense. 

If $A$ is not an isolated single system, but entangled with another system $B$, then the above criterion may not be applicable. For example, suppose Alice and Bob share the bipartite state $
\ket{\Psi(\theta)}_{AB} =\cos(\theta)\ket{00}_{AB} +\sin(\theta)\ket{11}_{AB}, 
~~~\theta \in [0,\frac{\pi}{2}].
$
Alice measures $A$ in the basis $\sigma_Z$, and sends $A$ to Bob. As $\saa=0$ here, thus Definition \ref{def:proximal} is unable to indicate the reality of $A$'s measurement disturbance. 

In this situation, an entanglement-assisted version of protocol $\mathfrak{P}_1$ can be used, which we call $\mathfrak{P}_2$. Here, Alice and Bob share the entangled state $\varphi_{AB}$, with $A$ being initially with Alice and $B$ with Bob. As in $\mathfrak{P}_1$, Alice sends $A$ to Bob after measuring $x_0$ on it. Bob performs a joint measurement on $A$ and $B$ that is incompatible \cite{busch1984various,busch2013comparing} with $x_0$ (see Sec. \ref{sec:a1}). If he can (probabilistically) determine whether or not Alice measured $x_0$,  he infers that her measurement really disturbed $A$ in an operational sense.  We denote by $\mathfrak{S}_{A\rightarrow AB}$ the amount of information (in bits) about Alice's measurement choice that can be communicated to Bob in this way. In place of Definition \ref{def:proximal}, we have the following operational criterion (applicable to any sufficiently rich GPT):
\begin{defn}
	Given bipartite state $\varphi_{AB}$ and local measurement $x$ on $A$ in protocol $\mathfrak{P}_2$, if it is the case that
	\begin{equation}
	\saab > 0,
	\label{eq:saab}
	\end{equation}
	then the measurement disturbance Eq. (\ref{eq:mezdis}) constitutes an operationally real disturbance of $A$. $\hfill \blacklozenge$
	\label{def:proxijoint}
\end{defn} 

In the context of the above example with state $\ket{\Psi(\theta)}_{AB}$, the post-measurement \textit{bipartite} state with Bob is distinguishable from the initial bipartite state, i.e., $\rho_{AB}^{(Z)}\equiv \cos^2(\theta)\ket{00}_{AB}\bra{00} + \sin^2(\theta)\ket{11}_{AB}\bra{11} \ne \ket{\Psi(\theta)}_{AB}\bra{\Psi(\theta)}$, implying the satisfaction of Eq. (\ref{eq:saab}). Accordingly, Bob infers the operational reality of the measurement disturbance of $A$ per Definition \ref{def:proxijoint}. Analogous situations of disturbance can be pointed out for example in the case of Spekkens' toy theory \cite{spekkens2007evidence} and the box world \cite{oppenheim2010uncertainty}.




\section{Remote measurement-disturbance in a bipartite system \label{sec:multi}}
Suppose Alice and Bob share the quantum state $\ket{\Psi(\theta)}$. Alice measures $\sigma_X$ on qubit $A$ obtaining outcome $\ket{\pm} \equiv \frac{1}{\sqrt{2}}(\ket0 \pm \ket1)_A$. Correspondingly, Bob's particle $B$ collapses to the state $\ket{\theta^\pm} \equiv \cos(\theta)\ket0_B \pm \sin(\theta)\ket{1}_A$. Does this remote collapse constitute a real disturbance of system $B$?

More generally, in a GPT context, suppose Alice and Bob share an entangled state $\varphi_{AB}=\sum_{\lambda} p_\lambda \varphi_{AB}^\lambda$, $p(\lambda)$ being a probability distribution, where Bob's marginal (or, reduced) state is denoted $\varphi_{B}^\#$. Alice's measurement of $x$ on $A$ conditioned on her obtaining outcome $a$ leaves Bob's system in the unnormalized  (indicated by a tilde) state:
\begin{equation}
\tilde{\varphi}^{a|x}_{B} = \sum_\lambda p_\lambda \mathfrak{p}(a|x,\lambda)
\varphi_B^{a|x,\lambda}.
\label{eq:postmeasure}
\end{equation} 
Denote its normalized version by $\varphi_{B}^{a|x} \equiv \mathcal{N}\tilde{\varphi}^{a|x}_{B}$ where $\mathcal{N} \equiv [\sum_\lambda p(\lambda) \mathfrak{p}(a|x,\lambda)]^{-1}$. Does the state change 
\begin{equation}
\varphi_B^\# \longrightarrow \varphi_{B}^{a|x},
\label{eq:remmezdis}
\end{equation} 
which is the remote analogue of Eq. (\ref{eq:mezdis}), constitute a real disturbance of $B$ in an operational sense? That is, can one advance a purely operational argument in support of the claim that Alice's remote measurement-disturbance of system $B$ constitutes an objective change of $B$? 

An appeal here to the direct analogue of Definition \ref{def:proximal} is obviously ruled out by virtue of no-signaling, which entails that $\mathfrak{S}_{A\rightarrow B}=0$. 
Furthermore, in an instance where Eq. (\ref{eq:saab}) holds true, the signal would be attributed to the (local) disturbance of $A$ rather than to the (remote) disturbance of $B$. Thus, a simple adaptation of Definition \ref{def:proxijoint} is also ruled out. We now present an indirect, operational criterion to indicate the reality of a remote disturbance. 

We will require a specific feature of our ontological formalism, which is that it should satisfy a reasonable consistency principle in assigning reality to the measurement disturbances of multiple particles in a given measurement situation. In particular, given a pure joint state $\varphi_{AB}$, if the disturbances to $A$ and $B$ are identical (to each other) when either particle is measured, then the operational reality status of the two disturbances must also be identical.  This notion of consistency is natural since holding the disturbance of one of the particles to be real, but not that of the other  particle that is identically disturbed in the same measurement situation, would make this operationally inspired ontological system somewhat incoherent. Formally:
\begin{ftr}
	Suppose $\varphi_{AB}$ is a pure state such that the disturbances of $A$ and $B$ are identical (to each other) under measurement of $x$ on $A$. That is, the marginal states of $A$ and $B$ in $\varphi_{AB}$ are identical:	
\begin{equation}
\varphi_{A}^\# = \varphi_{B}^\#,
\label{eq:op1}
\end{equation}
and furthermore, their respective post-measurement states are also identical: 
\begin{equation}
\varphi_{A}^{a|x} = \varphi_{B}^{a|x}.
\label{eq:op2}
\end{equation}
Then, consistency requires that any attribution of operational reality to the  disturbances of $A$ and $B$ should be identical, i.e., either both disturbances are deemed operationally real, or both are deemed not.
$\hfill \blacklozenge$
\label{ftr:consistent}
\end{ftr}
Here we note that in QM, any pure bipartite state has the symmetric property of Eq. (\ref{eq:op1}). Therefore,  to fulfill the conditions of Feature \ref{ftr:consistent}, it suffices to choose a suitable measurement satisfying the symmetry property Eq. (\ref{eq:op2}). 
Feature \ref{ftr:consistent} is a means for (indirectly) addressing the question of the reality of $B$'s remote measurement-disturbance Eq. (\ref{eq:remmezdis}), given the constraint of no-signaling. This argument may be formalized as follows: 
\begin{theorem}
Given pure state $\varphi_{AB}$, and Alice's measurement $x_0$ on system $A$, if the (local) measurement-disturbance of $A$ is: (a) operationally real per the criterion of Definition \ref{def:proxijoint}, and:  (b) is identical to the (remote) measurement-disturbance of system $B$, then the latter disturbance is operationally real. 
	\label{def:distal}
\end{theorem}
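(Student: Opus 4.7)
The plan is to reduce the theorem to a direct application of Feature \ref{ftr:consistent} (the consistency principle), using hypothesis (a) to select its ``real'' branch. The bridge is hypothesis (b): the statement that the local disturbance of $A$ and the remote disturbance of $B$ are identical should be read as the formal identity of both the pre-measurement and post-measurement states of the two systems, i.e., Eq.~(\ref{eq:op1}) together with Eq.~(\ref{eq:op2}). Since $\varphi_{AB}$ is assumed pure, this places the situation squarely within the scope of Feature \ref{ftr:consistent}.

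Carrying this out in order, I would first spell out the translation explicitly by writing the disturbance of $A$ as the ordered pair $(\varphi_A^\#, \varphi_A^{a|x_0})$ and the disturbance of $B$ as $(\varphi_B^\#, \varphi_B^{a|x_0})$, so that ``identical'' forces componentwise equality and hence both Eq.~(\ref{eq:op1}) and Eq.~(\ref{eq:op2}). Next, I would invoke Feature \ref{ftr:consistent} to obtain the consistency dichotomy: either both disturbances are operationally real, or neither is. Finally, I would apply hypothesis (a), namely $\saab > 0$, which by Definition \ref{def:proxijoint} certifies the local measurement-disturbance of $A$ as operationally real; combined with the dichotomy, this rules out the ``neither is real'' branch and forces the remote disturbance of $B$ to be operationally real as well.

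The main obstacle I anticipate is interpretive rather than calculational. One must argue that the single word ``identical'' in hypothesis (b) really encodes both Eq.~(\ref{eq:op1}) (agreement of the initial marginals) and Eq.~(\ref{eq:op2}) (agreement of the post-measurement states), and not merely one of them; one must also treat Feature \ref{ftr:consistent}, presented as a reasonable desideratum on the ontological formalism rather than a derived fact, as a legitimate premise available in the proof. Given the author's framing, both should be accepted as part of the setup, after which the theorem follows by a one-line syllogism. A minor secondary point is to confirm that Definition \ref{def:proxijoint}, which is phrased for any ``sufficiently rich'' GPT, applies symmetrically to the disturbance of $B$ once the identifications of Feature \ref{ftr:consistent} are in force; the symmetry of Eqs.~(\ref{eq:op1})--(\ref{eq:op2}) under swapping $A \leftrightarrow B$ is what makes this automatic.
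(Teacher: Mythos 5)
Your final syllogism is exactly the concluding paragraph of the paper's proof: read hypothesis (b) as Eqs.~(\ref{eq:op1})--(\ref{eq:op2}), invoke the dichotomy of Feature~\ref{ftr:consistent}, and use hypothesis (a) to kill the ``neither is real'' branch. That part is fine. What you have omitted is the main body of the paper's argument, which the author explicitly describes as the content of the theorem (``negating the locality supposition''): a reductio showing that, under (a) and (b), no pre-existing local ensemble $\chi \equiv \{q(\mu), \phi_B^\mu\}$ can account for Bob's conditional states. The paper supposes $\tilde{\varphi}^{a|x}_{B} = \sum_\mu q(\mu)\, \mathfrak{q}(a|x,\mu)\, \phi_B^{\mu}$, uses Eq.~(\ref{eq:op2}) to lift this to a product decomposition $\tilde{\varphi}^{a|x}_{AB} = \sum_\mu q(\mu)\, \mathfrak{q}(a|x,\mu)\, \varphi_{A}^\mu \varphi_B^{\mu}$ with $\varphi_A^\mu = \varphi_B^\mu$, and observes that the non-selective joint state $\sum_a \tilde{\varphi}^{a|x}_{AB}$ is then independent of $x$, so Bob's joint measurement could extract no information about Alice's choice --- contradicting $\saab > 0$, i.e., condition (a).

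Why this matters: without that step you have not shown that $B$ is \emph{disturbed} at all in the sense the paper intends. If a local ensemble existed, the state change in Eq.~(\ref{eq:remmezdis}) would be mere Bayesian updating over a pre-existing ensemble, and there would be no remote disturbance for Feature~\ref{ftr:consistent} to adjudicate --- the ``reality status'' you transfer from $A$ would attach to a non-event. The reductio is precisely what certifies that the general, $x$-dependent form Eq.~(\ref{eq:postmeasure}) must hold, i.e., that Alice genuinely re-prepares $B$'s state; only then does the Feature~\ref{ftr:consistent} syllogism deliver the theorem's conclusion rather than a weaker statement about label consistency. You should restore this first half before the final one-line step.
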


Intuitively, the idea here is that if the correlations between $A$ and $B$ are local, then Alice's measurement should operationally disturb only $A$, but not $B$. In particular, non-trivial disturbances of $A$ and $B$ couldn't be identical. But Theorem \ref{def:distal} gives conditions under which $A$ is operationally disturbed and yet this identicality holds, thereby negating the locality supposition. The proof below of Theorem \ref{def:distal} is essentially a formal elaboration of this idea.

\begin{proof}
Suppose that locality holds on the operational level, meaning that Alice's measurement on $A$ doesn't disturb the state of system $B$ in an operational sense. More precisely, there exists a pre-existing ensemble of states on Bob's side $\chi \equiv \{q(\mu), \phi_B^\mu\}$, such that the post-measurement state of $B$ can be expressed as 
$
\tilde{\varphi}^{a|x}_{B} = \sum_\mu q(\mu) \mathfrak{q}(a|x,\mu) \phi_B^{\mu}, 
$
where $q(\mu)$ and $\mathfrak{q}(a|x,\mu)$ are probability distributions. It follows in view of the symmetry condition (b), specifically requirement Eq. (\ref{eq:op2}), that the joint state of the composite system $AB$ conditioned on Alice's measurement, is given by 
$$\tilde{\varphi}^{a|x}_{AB} = \sum_\mu q(\mu) \mathfrak{q}(a|x,\mu) \varphi_{A}^\mu\varphi_B^{\mu},$$
such that $\varphi_A^{\mu}=\varphi_B^{\mu}$. This implies that, from Bob's perspective the non-selective joint state is given by 
$$
\sum_a \tilde{\varphi}^{a|x}_{AB} = \sum_\mu q(\mu) \varphi_A^{\mu}\varphi_B^{\mu},
$$ 
which is independent of $x$, entailing that Alice's choice $x$ could not be deduced by Bob via his joint measurement of $AB$, contradicting condition (a). Therefore, given the satisfaction of condition (a), it follows that particle $B$ is necessarily disturbed from afar by Alice's measurement in the sense that no such local ensemble $\chi$ exists for Bob. In this case, the general expression Eq. (\ref{eq:postmeasure}) should hold, where the specific outcome state $\tilde{\varphi}^{a|x}_{AB}$ depends on $x$.

To show that this remote disturbance of $B$ is operationally real, we note that by condition (b), the disturbance at $B$ is identical to that at $A$. By Feature \ref{ftr:consistent}, the same reality status should be assigned to both these disturbances. Therefore, the operational reality of $B$'s disturbance follows, given that of $A$'s disturbance by virtue of fulfillment of condition (a).
\end{proof}

A state $\varphi_{AB}$ that admits such a remote measurement-disturbance in the above sense may conveniently be called nonlocal in an \textit{operationally real} (OR) sense. The state $\ket{\Psi(\theta)}$ is evidently OR nonlocal according to Theorem \ref{def:distal}, with measurement $x_0 \equiv \sigma_Z$. 
 
In the standard Bell test and EPR steering scenarios, the particles $A$ and $B$ are measured in geographically separated stations. By contrast, in the present scenario, Alice sends $A$ to Bob such that he can subsequently perform a joint measurement on the composite system $AB$. Specifically, condition (a) above corresponds to the violation of a type of no-signaling-in-time (NSIT) by particle $A$. In the context of temporal correlations, NSIT is a statistical characterization of non-invasive measurability \cite{kofler2013condition, clements2016fine}. Given that both particles are available at the same place for the second measurement in the above scenario, it would be desirable to elucidate the operational sense in which Theorem \ref{def:distal} establishes the nonlocality of the correlations between $A$ and $B$. 

To this end, we may consider an alternative, equivalent protocol that makes this nonlocality explicit. We now make the important if somewhat straightforward observation that conditions (a) and (b) can be guaranteed through local operations and classical communication (LOCC), without requiring Alice to transmit particle $A$ to Bob for his joint measurement. We note that the initial state $\varphi_{AB}$ must be entangled in order for Alice to be able to re-prepare $B$'s state and thereby disturb it. Alice's measurement disentangles $\varphi_{AB}$ into a product state $\hat{\varphi}_A^{a|x} \otimes \tilde{\varphi}_B^{a|x}$, where $\hat{\varphi}_A^{a|x}$ denotes the normalized outcome obtained by Alice on $A$, and $\tilde{\varphi}_B^{a|x}$ is given by Eq. (\ref{eq:postmeasure}). Both the initial state $\varphi_{AB}$ and the final product state can be ascertained by LOCC in a GPT with the local tomographic property. Thus, the symmetry condition (b) can be directly checked. Moreover, noting that $\sum_a \hat{\varphi}_A^{a|x} \otimes \tilde{\varphi}_B^{a|x} \ne \varphi_{AB}$, it follows that condition (a) can also be checked by LOCC. 

The above considerations show that the conditions of Theorem \ref{def:distal}, and thus the reality of the remote disturbance, can be verified using only LOCC and without any quantum or nonclassical communication from Alice to Bob, thereby bringing the scenario of the OR nonlocality closer to that of Bell nonlocality or EPR steering. As shown below, this allows OR nonlocality to be deduced by a steering-like statistical inequality. Later, this will be shown to correspond to a one-sided device-independent (DI) characterization of OR nonlocality.

Mixed states present a kind of ``operational preparation-contextuality'', somewhat reminiscent of HV-ontological preparation contextuality \cite{spekkens2005contextuality}.  This is related to the fact that a nonclassical GPT,  being characterized in general by a non-simplicial state space, admits multiple pure-state decompositions of a given mixed state $\varphi_{AB}$, whereas the concept of reality will arguably depend on the actual decomposition that is the case.  Accordingly, with regard to indicating OR nonlocality, we have the following criterion.
\begin{defn}
\label{def:mix}
Given state $\varphi_{AB}$ prepared by a known probabilistic procedure of mixing an OR nonlocal component and a separable component, with probabilities $p$ and $1-p$, respectively, the remote measurement-disturbance of $B$ in state $\varphi_{AB}$ is  OR nonlocal with probability $p$.
    $\hfill \blacklozenge$
\end{defn}

For example, suppose Charlie prepares state $\ket{\Psi(\theta)}$ with probability $f$ and the four computational basis states $\ket{00}, \ket{01}, \ket{10}$ and $\ket{11}$ with equal probability $(1-f)/4$. He sends the first particle to Alice and the second to Bob. Accordingly, Alice and Bob share the Werner-like state 
$
W(f,\theta) \equiv f\ketbra{\Psi(\theta)} + (1-f)\frac{\mathbb{I}_4}{4},
$
with mixing parameter $f \in [0,1]$. Then Charlie can assert that the remote measurement-disturbance of Bob's particle is operationally real with probability $f$, with the choice $x_0 \equiv \sigma_Z$. 

As well, the state $W(f,\theta)$ can be prepared by Charlie mixing the state $\ket{\Psi(\theta+\frac{\pi}{2})}$ with probability $\frac{1-f}{2}$ and $\ket{\Psi(\theta)}$ with probability $\frac{1+f}{2}$. Since both $\ket{\Psi(\theta)}$ and $\ket{\Psi(\theta +\frac{\pi}{2})}$ are nonlocal in an operationally real sense, Charlie asserts that given the state $W(f,\theta)$ under this preparation, the remote measurement-disturbance of Bob's particle is operationally real with probability 1.  Setting $f \equiv 0$ here, we find that even a maximally mixed state can be potentially OR nonlocal to the maximum extent. This surprising observation essentially has to do with the idea that the underlying reality of disturbance should be unaffected by the observer's state of knowledge.

If the preparation information of a given mixed state $\varphi_{AB}$ is unavailable, then its remote measurement-disturbance under measurement $x$ is said to be probabilistically OR nonlocal if $\varphi_{AB}$ contains a non-vanishing OR nonlocal component under any pure-state decomposition.  Here, a direct application of conditions (a) and (b) of Theorem \ref{def:distal} can be misleading. For example, the separable state $\half(\ket{\theta^+}\bra{\theta^+} \otimes \ket{0}\bra{0} + \ket{\theta^-}\bra{\theta^-} \otimes \ket{1}\bra{1})$, when $x \equiv \sigma_Z$, satisfies (a) and also, with  high probability, condition (b), by choosing sufficiently small $\theta$. Yet, it is evident that there can be no remote measurement-disturbance of $B$ for any separable state.

Therefore, to obtain a lower bound on OR nonlocality a different approach is required for mixed states. Here a key observation, noted above in the proof of Theorem \ref{def:distal}, is that Alice steers Bob's state. Given a steerable state $\rho$, the required lower bound is associated with the pure-state decomposition that gives the lowest fraction of OR nonlocality. In the above example of the state $W(f,\theta)$, the first decomposition provides the minimal decomposition. Thus if $f$ is sufficiently large to guarantee steerability, then the mixture is OR nonlocal with probability at least $f$. 

To witness steering in an arbitrary GPT, we use the uncertainty principle. This refers to the feature whereby two or more observables cannot simultaneously assume exact values \cite{oppenheim2010uncertainty}. Let $\mathfrak{p}(b|y)$ represent the probability of outcome $b$ upon measurement of $y$ on a given operational state $\varphi$. Define $\mathcal{P}(y) \equiv \max_b  \mathfrak{p}(b|y)$. Given measurements $y_0, y_1$ and $y_2$, an uncertainty relation exists if for any state $\varphi$
\begin{equation}
\mathcal{P}(y_0) + \mathcal{P}(y_1) + \mathcal{P}(y_2) \le \upsilon
\label{eq:unc0}
\end{equation}
such that $\upsilon < 3$. For classical theory, $\upsilon=3$ for any triple of sharp measurements.
Here it is assumed that the measurements $y_j$ aren't ``trivial'', such as one that produces a fixed outcome for any measured state $\varphi$. In this case, of course there is trivially no uncertainty.

Consider protocol $\mathfrak{P}_3$, where Alice and Bob share the state $\varphi_{AB}$. Alice performs measurement $x_j$ on the particle $A$ and predicts the outcome for Bob, who performs the corresponding measurement $y_j$ on $B$. We consider the conditional version of Eq. (\ref{eq:unc0}), namely:
\begin{equation}
\mathcal{P}(y_0|x_0) + \mathcal{P}(y_1|x_1) + \mathcal{P}(y_2|x_2)
~\le~ \upsilon,
\label{eq:eta}
\end{equation}
where $\mathcal{P}(y_j|x_j)$ represents Bob's certainty in $y_j$ measured on $B$, conditioned on Alice's measurement of $x_j$ on $A$ and knowing the outcome. The violation of Eq. (\ref{eq:eta}) certifies that Alice's measurement genuinely re-prepares-- and thereby disturbs-- the state of $B$. As such, Eq. (\ref{eq:eta}) represents an EPR (spatial) steering inequality. A necessary condition here is that the measurement pairs $x_0, x_1$ and $x_3$ must be pairwise incompatible for its violation (Appendix, Part II). 

Given a pure state $\varphi_{AB}$ in a GPT and a pair of measurements $(x_0, y_0)$, if measuring $x_0$ on $A$ and $y_0$ on $B$ produce identical results, i.e., Eq. (\ref{eq:op2}) holds, then the pair $(x_0, y_0)$ is said to be commensurate for $\varphi_{AB}$. For a mixed state prepared by combining such $\varphi_{AB}$ (with probability $p$) and other states, the pair $(x_0, y_0)$ is said to be commensurate with probability (at least) $p$. For example, the measurements $x_0 = y_0 := \sigma_Z$ is commensurate for the state $\ket{\Psi(\theta)}$. Consider a mixed states such as $W(f,\theta)$, that are known to be noisy versions of pure states that have identical marginal states (i.e., satisfy Eq. (\ref{eq:op1})) and either admit a pair of commensurate measurements or can be brought to that form by application of suitable local reversible operations (i.e., satisfy Eq. (\ref{eq:op2})). 
Promised a mixed state with this commensurate measurement property, the following result shows that the degree of violation of Eq. (\ref{eq:eta}) can be used to lower-bound the degree of OR nonlocality.

\begin{theorem}
	Given 
	the violation of the inequality Eq. (\ref{eq:eta}) with observed correlation $\upsilon^\ast>\upsilon$, the remote measurement-disturbance of $B$ under measurement $x_0$ on $A$ is operationally real with probability at least $\frac{\upsilon^\ast-3/2}{\upsilon_{\max}-3/2}$, where $\upsilon_{\max}$ is the largest violation ($\le 3$) allowed in the given GPT.
	\label{thm:main}
\end{theorem}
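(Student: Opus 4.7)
The plan is to exploit the affine dependence of the steering-inequality left-hand side on the shared state, together with a convex decomposition of $\varphi_{AB}$ into an OR-nonlocal component (of weight $p$, to be bounded from below) and a non-OR-nonlocal remainder (of weight $1-p$). By Definition \ref{def:mix}, the operationally real probability is precisely the largest such $p$ admitted by any valid decomposition. Since $\mathfrak{p}(a|x_j)$ and $\mathfrak{p}(b|y_j,a,x_j)$ are linear in the state, each conditional certainty $\mathcal{P}(y_j|x_j)$ is also affine, and hence so is the full LHS of Eq.~(\ref{eq:eta}). Writing $\varphi_{AB}=p\,\sigma_{AB}+(1-p)\,\tau_{AB}$, one therefore has $\upsilon^\ast = p\,\upsilon^\ast(\sigma_{AB}) + (1-p)\,\upsilon^\ast(\tau_{AB})$.

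Next I would bound each term separately. For the OR-nonlocal part, $\upsilon^\ast(\sigma_{AB})\le \upsilon_{\max}$ by the very definition of $\upsilon_{\max}$ as the largest steering violation allowed by the GPT. For the remainder $\tau_{AB}$, I would reuse the argument from the proof of Theorem \ref{def:distal}: being non-OR-nonlocal, this component admits a pre-existing ensemble $\{q(\mu),\phi_B^\mu\}$ on Bob's side that is independent of Alice's measurement choice $x_j$. In this regime, Alice's prediction for Bob's outcome of $y_j$ given her own outcome can only exploit her classical information about the hidden label $\mu$, and since the three measurements $x_0,x_1,x_2$ are pairwise incompatible (as required for any violation at all, per the paper's Appendix Part II), she cannot be correlated with Bob's best binary guess on all three simultaneously. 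The resulting per-measurement bound $\mathcal{P}(y_j|x_j)\le \tfrac{1}{2}$, summed over the three terms, gives $\upsilon^\ast(\tau_{AB})\le 3/2$. Plugging both bounds into the affine decomposition yields $\upsilon^\ast \le p\,\upsilon_{\max}+(1-p)\cdot\tfrac{3}{2}$, which rearranges to $p \ge (\upsilon^\ast - 3/2)/(\upsilon_{\max}-3/2)$.

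The main obstacle is the rigorous justification of the $3/2$ bound on the non-OR-nonlocal remainder; this is what forces the commensurate-measurement and binary-outcome hypotheses into the theorem's preamble. Without pairwise incompatibility of $x_0,x_1,x_2$, classical correlations could in principle saturate $\mathcal{P}(y_j|x_j)=1$ along a single shared axis, invalidating the sum bound; and without the commensurate-measurement property promised in the statement, the decomposition into an OR-nonlocal piece plus a separable residual may not match the actual preparation of the mixture, so the identification of $1-p$ with a genuinely non-steering contribution could fail. Both loopholes are closed by the preceding discussion preceding the theorem, so the last step of the proof is essentially to cite those ingredients and perform the linear algebra above.
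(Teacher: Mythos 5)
Your overall strategy coincides with the paper's: write $\varphi_{AB}$ as a convex mixture of an OR-nonlocal piece and a remainder, use affinity of the left-hand side of Eq.~(\ref{eq:eta}) in the state, bound the two contributions by $\upsilon_{\max}$ and $3/2$ respectively, and solve the resulting linear interpolation for $p$. The paper's proof does exactly this arithmetic ($f_{\min}\upsilon_{\max}+(1-f_{\min})\tfrac{3}{2}=\upsilon^\ast$) and then invokes the commensurate-measurement promise to conclude that the conditions of Theorem~\ref{def:distal} hold on the pure fraction. So the route and the final formula match.

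However, your justification of the crucial bound $\upsilon^\ast(\tau_{AB})\le 3/2$ contains a step that is simply false. You claim that, because $x_0,x_1,x_2$ are pairwise incompatible, a non-OR-nonlocal component must satisfy $\mathcal{P}(y_j|x_j)\le\tfrac12$ \emph{for each} $j$. A classically correlated separable state such as $\tfrac12(\ket{00}\bra{00}+\ket{11}\bra{11})$ with $x_0=y_0=\sigma_Z$ gives $\mathcal{P}(y_0|x_0)=1$ on one term; incompatibility of Alice's measurements does not prevent perfect classical correlation along a single axis. What a pre-existing ensemble $\{q(\mu),\phi_B^\mu\}$ actually guarantees is only that the \emph{sum} obeys the single-system uncertainty bound, $\upsilon^\ast(\tau_{AB})\le\upsilon$ (e.g.\ $5/2$ in the qubit example), which would yield the weaker bound $p\ge(\upsilon^\ast-\upsilon)/(\upsilon_{\max}-\upsilon)$. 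The figure $3/2$ enters the paper not as a general upper bound on local components but as the value attained by the specific noise admitted under the theorem's promise (a white-noise admixture as in $W(f,\theta)$, where each term on the $\mathbb{I}_4/4$ component equals exactly $\tfrac12$); the paper labels it the ``algebraic minimum'' of the left-hand side. If you want the $3/2$ to survive for an arbitrary non-steering remainder, you must either restrict the remainder to be uncorrelated/isotropic (as the commensurate-measurement promise effectively does) or replace $3/2$ by $\upsilon$; the incompatibility argument you offer does not close this gap.
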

\begin{proof}
In Eq. (\ref{eq:postmeasure}), if the remote repreparation of the state of $B$ can be explained by a pre-existing, hidden state ensemble of $B$-states $\{q(\mu), \phi_B^\mu\}$,  then conditioning on $x$ won't provide information to beat the uncertainty relation Eq. (\ref{eq:unc0}) for system $B$. Therefore, the EPR steering inequality Eq. (\ref{eq:eta}) will hold. It follows that under its violation, the measurement disturbance of $A$ disturbs the system $B$.  

Let the minimal pure fraction leading to the violation in all possible pure state decompositions of the given state $\varphi_{AB}$ be $f$. Then $f \ge f_{\min}$, where $f_{\min}\upsilon_{\max} + (1-f_{\min})\frac{3}{2} = \upsilon^{\ast}$, and $\upsilon_{\max}$ and $\frac{3}{2}$ are, respectively, the theory-dependent maximum and the algebraic minimum value attainable by $\upsilon$ in Eq. (\ref{eq:eta}). Solving the equation, we find $f_{\min} = \frac{\upsilon^{\ast}-3/2}{\upsilon_{\max}-3/2}$. By assumption, the mixed state is a noisy version of a pure state with identical marginal states and the commensurate measurement associated with the pair $(x_0, y_0)$. Therefore, with probability at least $f_{\min}$, the conditions of Theorem \ref{def:distal} must be satisfied.
\end{proof}

To clarify Theorem \ref{thm:main}, we note that the violation of the inequality Eq. (\ref{eq:eta}), as discussed, entails the remote preparation of $B$'s state (Eq. (\ref{eq:postmeasure})) and hence the disentanglement of the initial state $\varphi_{AB}$. This in turn implies the verification of the reality condition (a) in the equivalent LOCC scenario. However, quantifying how closely condition (b) is supported by the experimental conditional probabilities $\mathcal{P}(y_j|x_k)$ would require certain theory-dependent assumptions. Furthermore, experimentally estimating the violation of inequality Eq. (\ref{eq:eta}) in terms the theory-specific quantity $\upsilon_{\max}$ presupposes that the states of $B$ are well characterized. On the other hand, the probability $\mathfrak{p}(a|x,\lambda)$ in Eq. (\ref{eq:postmeasure}) can be arbitrary. This situation corresponds to \textit{one-sided device-independence}, which is of practical importance in cryptography \cite{branciard2012one}. In Section \ref{sec:bell}, we show how the above result can be strengthened to a fully device-independent characterization of OR nonlocality. 

As a specific realization of Eq. (\ref{eq:eta}), we consider $\mathcal{P}(y_0|x_0) + \mathcal{P}(y_1| x_1, a_1) + \mathcal{P}(y_2| x_1, \overline{a}_1) \le \upsilon$, where Alice has only two measurement choices: $x_0$ and $x_1$, whilst Bob has three. If Alice measures $x_0$, then Bob measures $y_0$. If her measurement is $x_1$, then corresponding to her outcome $a_1$ (resp., $\overline{a}_1$), Bob measures $y_1$ (resp., $y_2)$. 	For an application to the quantum context,  we set $x_0=y_0 :=\sigma_Z$, $x_1 :=\sigma_X$, $y_1 := \sin(2\theta)\sigma_X + \cos(2\theta)\sigma_Z$ and $y_2 := \sin(2\theta)\sigma_X  - \cos(2\theta)\sigma_Z$. For these settings of Bob, the single system uncertainty bound is given by $\upsilon  = \frac{5}{2}$. 

 The state $\ket{\Psi(\theta)}$ under the above settings entails a violation of Eq. (\ref{eq:eta}) up to its algebraic maximum of 3 for any $\theta$ in the above range.  Moreover, the margin of maximal violation over the local bound $\upsilon$ is $\frac{3}{\upsilon} = \frac{6}{5}$, for the optimal choice $\theta=\pi/6$ here. This can be shown to be larger than the optimal margin of $\frac{2}{\upsilon} = \frac{2\sqrt{2}}{\sqrt{2}+1}\approx 1.17$ in the case of the analogous two-term steering inequality (Appendix). Thus, inequality Eq. (\ref{eq:eta}) is  suitable for the non-maximally entangled state $\ket{\Psi(\theta)}$.

\section{Relation to Bell nonlocality \label{sec:bell}}
Bell-nonlocality is a stronger condition than steering. 
With measurement settings as above, the bound $\upsilon :=\frac{5}{2}$ in Eq. (\ref{eq:eta}).
If $f >  \frac{2}{3}$, Werner-like states $W(f,\frac{\pi}{6})$ violate the inequality.
 Using the two-qubit nonlocality criterion \cite{horodecki1995violating}, we find that the state $W(f,\frac{\pi}{6})$ is Bell nonlocal for $f > \frac{2}{\sqrt{7}}$. Thus, in the range $f \in [\frac{2}{3},\frac{2}{\sqrt{7}}]$, the state $W(f,\frac{\pi}{6})$ is Bell local but nonlocal in an operationally real way. It is important to stress that the existence of such Bell-local states doesn't undermine the operational reality of the remote measurement-disturbance precisely because the hidden variables of the Bell-local theory are not part of the operational theory of our interest. Indeed, the basic premise here that we decide the reality of a disturbance by operational considerations alone, without recourse to hidden-variable ontology. 
Another point is that given correlation $P(a,b|x,y)$ of OR-nonlocal but Bell-local correlations in a GPT, the classical dimension $|\lambda|$ of shared randomness in a local model will be larger than the dimension $\mathfrak{d}$ of the correlated systems of the GPT, i.e., such correlations will be \textit{superlocal} \cite{donohue2015identifying,jebaratnam2017nonclassicality}. It is reasonable to assume that the GPT encompasses classical theory and so any local correlation requiring shared randomness $|\lambda| \le \mathfrak{d}$ can be produced by local measurements on separable states of the GPT. Thus, OR nonlocality entails superlocality, i.e., $|\lambda| > \mathfrak{d}$.

Finally, let us point out that the above characterization of OR nonlocality in the LOCC scenario naturally leads to a device-independent (DI) characterization thereof. The idea is to obtain a sufficient condition for OR nonlocality without reference to  theory-dependent parameters such as $\upsilon_{\max}$. We suppose that in protocol $\mathfrak{P}_3$, Alice and Bob perform their measurements simultaneously on the pre-shared state $\varphi_{AB}$.  Importantly, there is no classical communication from her to Bob or vice versa until after their measurements. Further, we suppose that from a subset of the resulting conditional probabilities $P(a,b|x,y)$, Alice and Bob construct a Bell-type inequality, e.g.,
\begin{equation}
P_{11}^= + P_{12}^= +P_{21}^= + P_{22}^{\ne} \le 3,
\label{eq:bell}
\end{equation}
where $P_{jk}^= \equiv P(a_j=b_k|x_j, y_k)$ and $P_{jk}^{\neq} \equiv P(a_j \neq b_k|x_j, y_k)$. In an arbitrary GPT, under maximal violation of this inequality, the quantity in the l.h.s can go up to the algebraic maximum of 4, while in QM it is $2 + \sqrt{2}\approx 3.414$. 

By design, if Alice's conditional probability  has the local-realist form $P(b|a,x,y) = \sum_\lambda p_\lambda \mathfrak{p}(a|x,\lambda) \mathfrak{p}_B(b|y,\lambda)$, i.e., Bob's measurement merely reveals a pre-existing value of $y$ on $B$, then the satisfaction of the inequality Eq. (\ref{eq:bell}) follows. 
Accordingly, the violation of the inequality implies the absence of such a value and can thereby serve as the basis to certify the entanglement in state $\varphi_{AB}$ in a DI manner, i.e., irrespective of the details of the operational theory governing the subsystems $A$ and $B$. Alice and Bob may perform a follow-up measurement on their respective particles to confirm the absence of entanglement in the post-measured particles, which trivially requires no theory-specific assumptions. 
By thus verifying the disentangling action of their measurements, they obtain a DI checking of the reality condition (a) of Theorem \ref{def:distal}.

As regards condition (b), it turns out that in the context of device independence, it can be relaxed, by not requiring the disturbances of particles $A$ and $B$ to be identical. As the violation of Eq. (\ref{eq:bell}) precludes the possibility of a pre-existing value of $y$ on $B$, Bob's conditional state here is in a sense created by Alice's act of measurement. (To improve the semantics, we let Bob's measurement to happen slightly later in their common reference frame, but in such a way that their measurements are spacelike separated.) Ontological consistency of the formalism then requires that the remote measurement-disturbance of $B$ be OR, given that the measurement-disturbance of $A$ is OR under a violation of inequality Eq. (\ref{eq:bell}). For otherwise, we would have the incoherent situation that $A$'s measurement disturbance is real, and yet not so the remote state-preparation that it is certified (by the Bell inequality) to have produced. This observation provides a natural extension to the consistency requirement of Feature \ref{ftr:consistent} in the DI scenario. 

We thus have the following strengthening of Theorem	\ref{thm:main}:
Given the violation of the inequality Eq. (\ref{eq:bell}), the remote measurement-disturbance of $B$ under measurement $x$ on $A$ is operationally real with a non-zero 
probability. 

\section{Discussion and conclusions \label{sec:conclu}}

The ontological question of whether measurement-induced disturbance is real in QM or other nonclassical operational theories is addressed in the present work by employing-- ironically-- only operational considerations. As such, it  can be construed as providing a different response to the EPR  paradox \cite{einstein1935can, wiseman2013quantum}, than both Bohr's \cite{bohr1935can} and Bell's \cite{bell1964on} responses. 

EPR asserted essentially that quantum spatial steering entailed a ``spooky action-at-a-distance'' \cite[Article 16]{bell2004speakable}, which they hoped could be banished in a more complete version of QM. Employing a different criterion of reality than EPR, the present work argued that the action-at-a-distance is an unavoidable feature of operational QM itself, and not just of QM's ontological completion (as follows from Bell's theorem).

The operationally reality of remote measurement-disturbance  sheds new light on the tension \cite{gisin2009quantum} between quantum nonlocality and special relativity. In particular, it highlights that quantum no-signaling is distinct from relativistic signal locality. The former is a consequence of the tensor product structure of the state space, with no  association to light speed, whereas the latter is essentially a prohibition on superluminal transmission of information arising from the Lorentz invariance of the light cone. Evidently, these two no-go conditions belong to two distinct frameworks, respectively. In point of fact, quantum nonlocality  is non-signaling even in non-relativistic QM. 


In an instance of OR nonlocality, as far as Alice can say, when she measures $x$ and obtains outcome $a$, particle $B$ is instantaneously left in the state $\tilde{\varphi}_B^{a|x}$, given by Eq. (\ref{eq:postmeasure}). The remote measurement-disturbance of $B$, given by Eq. (\ref{eq:remmezdis}), thus represents a spacelike influence linking the event of $A$'s measurement and the event of re-preparation of $B$'s state. Per Theorem \ref{thm:main}, this influence is nevertheless real in an operational sense, and thus arguably imposes an intrinsic time-ordering on the two events. This observation underscores a further aspect of the distinction between quantum no-signaling and signal locality in special relativity. Whereas the latter is conceivably a not unanticipated speed-limit on information propagation, by contrast the former seems surprising in light of OR nonlocality, and prompts the question-- especially relevant in the context of reconstructing QM from operational or information theoretic principles \cite{clifton2003characterizing, chiribella2011informational, dakic2011quantum}-- of why an OR nonlocal theory, such as QM, is non-signaling. Indeed, we understand the complementaristic role played here by the randomness of measurement outcomes in suppressing the signaling (cf. \cite{aravinda2015complementarity, kar2011complementary, hall2010complementary}), but this only answers the ``how'', rather than the ``why'' aspect of the question.  

Here we offer a simple and brief answer: that no-signaling is a consequence of the natural requirement of consistency between the properties of single systems and those of reduced systems. Suppose no-signaling could be violated in a GPT. Then, the inequality Eq. (\ref{eq:eta}) could be violated without Alice's classical communication of the outcome of her measurement $x_j$. But if so, then it would essentially mean that violation of the local uncertainty principle Eq. (\ref{eq:unc0}) at Bob's end, contradicting this property in the context of single systems.

\acknowledgments
The author thanks S. Aravinda for discussions. He acknowledges the support of Dept. of Science and Technology (DST), India, Grants No. MTR/2019/001516 and also the Interdisciplinary Cyber Physical Systems (ICPS) programme of the DST, India, Grant No. DST/ICPS/QuST/Theme-1/2019/14.

\bibliography{qvanta}


\appendix

\section{Measurement incompatibility and disturbance \label{sec:a1}}

Suppose measuring $x_0$ doesn't disturb the joint state $\varphi_{AB}$, i.e., $\mathcal{M}_{x_0 \otimes u}(\varphi_{AB}) = \varphi_{AB}$ (where $u$ represents an identity operation on $B$). Let $x_J$ be the joint measurement on $A$ and $B$ that is used to check for disturbance of $A$. This would entail that $x_0 \otimes u$ and $x_J$ are compatible, because we can construct their joint measurement, given by $J(a_0,a_J|x_0,x_J,\varphi_{AB}) \equiv \mathfrak{p}(a_0| x_0,\varphi_{AB}) \mathfrak{p}(a_J|x_J,\varphi_{AB})$ for this state, simply by first measuring $x_0$ and then $x_J$, and noting the respective outcome probabilities. Thus, the measurement of $x_0$ must (globally) disturb $\varphi_{AB}$. This disturbance can in principle be detected by joint measurement $x_J$ on systems $A$ and $B$, whereby $\saab > 0$. 
	
%

\section{Incompatibility and steering \label{sec:a2}}

The violaton of Eq. (\ref{eq:eta}) implies that the measurements $x_0, x_1$ and $x_2$ are pairwise incompatible. To show this, for simplicity consider the two-term variant of the above inequality:
\begin{equation}
\mathcal{P}(y_0|x_0) + \mathcal{P}(y_1|x_1) ~\le~ \upsilon_2,
\label{eq:eta2}
\end{equation}
where $\upsilon_2$ is the local uncertainty bound for two measurements. Suppose $x_0$ and $x_1$ are jointly measurable  \cite{busch1984various,busch2013comparing}.  In place of Eq. (\ref{eq:postmeasure}), we would have $\tilde{\varphi}^{a_0,a_1|x_0,x_1}_{B}  \equiv  \sum_\lambda p(\lambda)  \mathfrak{p}(a_0,a_1|x_0,x_1,\lambda)    
 \varphi_B^{a_0,a_1|x_0,x_1,\lambda}$, where the conditional probability $\mathfrak{p}(a_j|x_j,  \lambda)$ for either measurement should be  derivable as the marginal statistics of a ``master measurement'': $\mathfrak{p}(a_j|x_j,\lambda) =  \sum_{a_{\overline{j}}}
\mathfrak{p}(a_0, a_1|x_0, x_1, \lambda)$ where $\overline{j} \equiv j + 1 {\rm ~mod~} 2$. But this means that the states $\varphi_B^{a_0,a_1|x_0,x_1,\lambda}$ constitute hidden states to reproduce the result of the two measurements. Specifically, to implement the measurement of $x_0$, one marginalizes over $a_1$: $\tilde{\varphi}^{a_0|x_0}_B \equiv \sum_{a_1} \tilde{\varphi} ^{a_0,a_1|x_0,x_1}_{B}$, and vice versa.

\end{document}